\let\proof\@undefined
\let\endproof\@undefined
\let\theorem\@undefined
\let\endtheorem\@undefined
\newtheorem{thm}{Theorem}
\newtheorem{lem}[thm]{Lemma}
\newtheorem{assm}{Assumption}
\newtheorem{prop}[thm]{Proposition}
\newtheorem{defn}{Definition}
\theoremstyle{remark}
\newcommand{\norm}[1]{\left\Vert#1\right\Vert}
\newcommand{\abs}[1]{\left\vert#1\right\vert}
\newcommand{\Real}{\mathbb R}
\newcommand{\eps}{\varepsilon}
\newcommand{\mc}{\mathcal}
\newcommand{\g}{\gamma}
\renewcommand{\a}{\alpha}
\newcommand{\one}{\mathbf{1}}
\newcommand{\ie}{\emph{i.e.,}\xspace}
\newcommand{\eg}{\emph{e.g.,}\xspace}
\newcommand{\cf}{\emph{cf.}\xspace}
\title{\LARGE \bf
Large-Scale Strategic Games and Adversarial Machine Learning
}
\author{Tansu Alpcan, Benjamin I. P. Rubinstein, and Christopher Leckie
\thanks{This work was supported in part by the ARC Discovery Project DP140100819, DECRA DE160100584 and grant FLI-RFP-AI1.}
\thanks{T. Alpcan {\tt\small tansu.alpcan@unimelb.edu.au} is with Department of Electrical and Electronic Engineering; B. Rubinstein {\tt\small benjamin.rubinstein@unimelb.edu.au} and C. Leckie 
{\tt\small caleckie@unimelb.edu.au} are with Department of Computing and Information Systems, The University of Melbourne, Australia. }%
}
\begin{document}

\maketitle
\thispagestyle{empty}
\pagestyle{empty}

\begin{abstract}
Decision making in modern large-scale and complex systems such as
communication networks, smart electricity grids, and cyber-physical
systems motivate novel game-theoretic approaches. This paper
investigates big strategic (non-cooperative)
games where a finite number of individual players each have a large
number of continuous decision variables and input data points.
Such high-dimensional decision spaces and big data sets 
lead to computational challenges, relating to efforts
in non-linear optimization scaling up to large systems of variables.
In addition to these computational challenges, real-world players
often have limited information about their preference parameters
due to the prohibitive cost of identifying them or due to operating
in dynamic online settings. The challenge of limited information is
exacerbated in high dimensions and big data sets. Motivated by both computational
and information limitations that constrain the direct solution of big
strategic games, our investigation centers around reductions using
linear transformations such as random projection methods 
and their effect on Nash equilibrium solutions. Specific
analytical results are presented for quadratic games and approximations. In addition,
an adversarial learning game is presented where random projection and sampling
schemes are investigated.
\end{abstract}

\section{INTRODUCTION}
\noindent Game theory, which has its roots in economics, has recently become a mainstream approach to
a multitude of engineering problems in communications~\cite{mechbook}, 
electricity grids~\cite{debsawind,walidsmartgridgame,lawtansurisk}, and network security~\cite{alpcan-book,tambegame,Barth-2011-reactive}.
Providing a solid mathematical foundation for multi-agent decision making, game theory has also been
used extensively in optimization and control of networked and cyber-physical systems~\cite{basarcyberphysical1}. 

With the advent of large-scale data analytics, large-scale decision problems have become
prevalent in many engineering disciplines. Linear Programs with thousands of variables are now common-place; and convex optimization on large-scale data, aiming to overcome computational, storage and communication bottlenecks~\cite{convexbigdata1}, has emerged as a major area of study. 
Strategic games with continuous decision variables often rely on convex optimization and linear system theory. Unlike games with finite states and actions, studies on continuous-kernel games traditionally do not emphasize scalability nor large-scale data. \textit{In contrast to numerous works on game abstraction, there has been little discussion in the classical game theory literature on games with large numbers of continuous variables and big data sets.}

Strategic games with a large number of continuous variables and high-dimensional strategy spaces share unique research challenges with those involving large number of \textit{finite} states and/or actions~\cite{AAAI1510039}. One research challenge, common with convex optimization on large-scale data, is computational. Finding Nash Equilibrium strategies is often computationally prohibitive in large-scale games~\cite{algogame}. There have recently been efforts to address these computational issues using active-set 
or similar methods~\cite{tambelargegame1}. However, even novel computational methods may be infeasible in
certain scenarios such as repeated games played in real-time. The second and arguably more important challenge is a lack of information. It is difficult and oftentimes infeasible to identify the utility functions and preferences of individual players for each decision variable, especially if the number of variables grows to thousands or more.

\textit{This paper presents a framework for large-scale strategic \\ (non-cooperative) games where a finite number
individual players each have a large number of continuous decision variables.} Hence, it can be seen as complementary
to the existing literature on \textit{game abstraction} which shares similar aims~\cite{AAAI1510039}. The main difference is the focus on continuous-kernel games and strategy/action spaces. Specifically, nonzero-sum large-scale strategic games with high-dimensional continuous decision spaces and \textit{random projection} methods are investigated as a starting point. 
Our investigation centers around the reduction of large-scale strategic games using transformations such as random projections and their effect on Nash Equilibrium solutions. Analytically tractable results are presented for quadratic games and in an adversarial machine learning setting.

%
%

\subsection{Related Work}

Games with a large number of players are well-known in the literature, \cf \eg the concept of Wardrop Equilibrium~\cite{wardrop1}. More recently, mean-field games have been studied under the assumption of very large numbers of players on large systems. The basic idea behind mean-field games is approximating 
large games by a stylized model with a continuum of players~\cite{tembinemf1}. 

Scalability issues arise in games from a number of different perspectives. For example, in cooperative decision-making among a large population of agents whose opinions must be considered, \cite{GreeneKLS09} propose a Bayesian belief aggregation scheme among many agents holding a potentially diverse range of opinions. \cite{MarcolinoXJTB14} examine how the performance of a team of diverse agents improves in cooperative games, as the number of agents or possible actions increases. 

Rather than cooperative games, this paper focuses on strategic (non-cooperative) games. In our setting, recent work has investigated how scalability issues arise in specific contexts. \cite{tambelargegame1} consider strategic games with sequential strategies, in which large search spaces arise due to strategies that require perfect memory of the history of play for ensuring the existence of equilibria. The authors propose a hybrid approach that combines a compact representation for strategies with an incremental approach to strategy generation in order to address the search space complexity. \cite{YangJTO13} address the problem of Stackelberg security games, in which defenders need to assign resources to protect targets against attackers. The major challenge that they focus on is finding defender strategies that satisfy the underlying constraints on the resources that need to be allocated to each strategy. The authors use a cutting-plane algorithm to speed up the search in the defender's solution space. 

Game abstraction has emerged in recent years as a key enabler for solving large incomplete-information games with finite states and/or action sets. \cite{AAAI1510039} presents an excellent survey of abstraction of information or actions in games, motivated by incomplete information or scalability. Distinct to the literature on game abstraction, the focus of this paper is on games with continuous decision variables. 

Large-scale strategic games also arise in the context of adversarial machine learning: the study of statistical inference under adversarial influence.
A number of threat models fall under this umbrella~\cite{Barreno-2010}. While privacy-preserving learning has been met successfully by the theoretical frameworks of secure multi-party computation~\cite{lindell2000privacy} and differential privacy~\cite{Dwork06}, 
less is known about how to learn or predict on poisoned data. The bulk of related work in computer security, has been on one-shot attacker strategies seeking to force classifier errors. 
Related case studies include attacks on spam detection~\cite{wittel-wu-2004-attacking,lowd-meek-2005-good,LEET2008}, polymorphic worm detectors~\cite{newsome2006paragraph}, and on network anomaly detectors~\cite{fogla-lee-2006-evading,IMC09}.

Little progress has been made on defenses in adversarial learning. \cite{Dalvi-Domingos-KDD-2004} considered patching of simple classifier's `blind spots' to attack by instances that optimize an attacker cost function in a one-shot game-theoretic setting. \cite{Bruckner-Scheffer-NIPS-2009} identify conditions for the existence of unique Nash equilibria for static games for learning. However the models learned in both works are linear, representing a useful but limited model class. Regret minimizing learners~\cite{Cesa-Bianchi-Lugosi-ExpertGames-2006} have been used in security settings~\cite{Barth-2011-reactive,Blocki-RegretAudit-2011} but not with popular ``batch'' learners. Learning models in much larger classes such as random forests or deep neural networks are significantly high-dimensional, particularly in large-scale data settings where the data set size can support learning of large numbers of parameters in so-called non-parametric methods. These problems thus motivate the study of large-scale strategic games.

The adversarial learning setting relates to robust statistics~\cite{Huber-RS-1981} and online learning theory or the theory of regret minimization~\cite{Cesa-Bianchi-Lugosi-ExpertGames-2006}. The former assumes that an infinitesimal proportion of otherwise i.i.d. training data is contaminated by unbounded noise: not modeling a real-world attacker but rather a tool for studying break-down points and robustness to passive, benign outliers. Online learning theory comes closer to game theory, but the approach is typically to directly represent and update mixed strategies over predictions based on feedback in rounds. While there have been applications in security~\cite{IMC09,Barth-2011-reactive,Blocki-RegretAudit-2011}, neither framework has so far had a large impact on research in security.

Many machine learning problems involve inference over high-dimensional data---such as images, gene expression assays, text corpora---which can lead to both computational and statistical challenges. As a consequence, many machine learning researchers apply some form of dimensionality reduction. While some appeal to heuristic approaches such as performing principal components analysis prior to learning, theoretical foundations of projections have now become mainstream showing that learning on projected data requires less data or equivalently can be more accurate, while in some cases also requiring less time to achieve.

Projections used in machine learning are analogous to the reductions for large-scale strategic games discussed in this paper.
The Johnson-Lindenstrauss Lemma~\cite{johnson1984extensions} established the existence of low-dimensional linear projections that approximately preserve inter-point distances, a key characteristic used by many learning algorithms. The randomized version proves the same for random linear projections, 
with high probability. In their landmark paper, \cite{arriaga1999algorithmic} built on this property of random projections to show that model classes (that are robust, with some margin) are still PAC learnable~\cite{Valiant84} when randomly projected by data-independent mappings. The consequence being that learning may be possible with less data (in a formal sense) under projections. Such projections may prove useful in large-scale game reductions.

More recently, Rahimi \& Recht~\cite{rahimi2007random} use random projections to reduce the computational complexity of training support vector machine (SVM) classifiers on large data sets. For many problems, non-linear SVMs achieve state-of-the-art accuracy but take time cubic in the number of training examples $n$ to learn. SVM learning involves solving a quadratically-constrained quadratic program~\cite{scholkopf2001learning} whose dual involves the data only via inner-products---a kernel matrix. Their novel randomized projections are constructed such that inner-products are approximated uniformly, with high probability depending on the image dimension. This dimension can be taken to be much less than $n$, yielding much faster training as fewer parameters are learned, without paying with statistical performance since the SVM is relatively stable with respect to perturbations of the kernel.


\subsection{Contributions}

The main contributions of the paper include:
\begin{itemize}
	\item The characterization of large-scale strategic games;
 \item Reductions of large-scale strategic games using linear transformations such as random projections and their effect on equilibrium solutions;
 \item An analysis of convex and quadratic 2-player large-scale games and their equilibrium solutions; and
 \item An adversarial machine learning game that incorporates random projection and sampling based on
 a linear SVM formulation.
\end{itemize}



\section{MODEL AND DEFINITIONS} \label{sec:model}

The general model presented in this section focuses on nonzero-sum large-scale strategic games with high-dimensional 
continuous decision spaces and reduced games obtained through a linear mapping of player decision spaces. 

Let $\mc N:=\{P^1,P^2,\ldots,P^N\}$ be the set of players in a static, continuous kernel, $N$-Player strategic (non-cooperative) game. Each player $i \in \mc N$ chooses a pure strategy (decision) vector, $x^i$ from its convex and compact decision set $X^i \subseteq \Real^M$. The joint decision space of the game is therefore the product space $\mc X=X^1\times\ldots X^N \subseteq \Real^{M \times N}$. 
Each player is associated with a cost function $J^i(x^i,x^{-i}): \mc X \rightarrow \Real$, where
$x^i \in X^i$ and $x^{-i}$ is a shorthand for the decision vectors of all other players, $[x^1,\ldots,x^{i-1},x^{i+1},\ldots,x^N]$. The players are assumed to be rational and choose their decisions based on a best response 
strategy by solving the optimization problem
$$ x^{i,BR}\in\arg \min_{x^i \in X^i} J^i(x^i,x^{-i})\enspace,$$
given the actions of all other players $x^{-i}$.

Large-scale games are often identified by the fact that their players have a very large strategy or decision space.
The following straightforward definition formalizes this important distinction.

\begin{defn} \label{thm:large-scalegamedef}
Consider the static, continuous kernel, $N$-Player strategic (non-cooperative) game $\mc G(\mc N,\mc X,J)$,
where $\mc N$ is the set of players, $\mc X \subseteq \Real^{M \times N}$ is the joint decision space, and $J=[J^1,\ldots,J^N]$ denotes the real-valued player cost functions. The game is called a \textbf{large-scale strategic game}, $\mc G^B$  if the
individual player decision spaces have a very large dimension, \ie $M \gg 1$.
\end{defn}

\begin{assm} \label{thm:large-scalegameassm}
The large-scale game in Definition~\ref{thm:large-scalegamedef} is assumed to be intractable in its original form.
\end{assm}

This assumption holds in at least two well-motivated cases:
\begin{enumerate}
 \item The players cannot fully identify their and others' cost functions due to a large number of
 decision variables and preference parameters; or
 \item The large-scale problem of finding best responses or Nash equilibria is computationally infeasible within timing and resource constraints.
\end{enumerate}

As a starting point consider a reduced decision space $\mc Y \subseteq \Real^{K \times N}$, where $K<M$, obtained through a linear transformation $T: \mc X \rightarrow \mc Y$ represented by per-player transformation matrices $A^i$,
\begin{equation} \label{e:tmap}
 T: y^i=A^i x^i, \; x^i \in X^i\subseteq \Real^M, \;y^i \in Y^i\subseteq \Real^K, \;\forall i \in \mc N\enspace,
\end{equation}
as illustrated in Figure~\ref{fig:map}. Based on Assumption~\ref{thm:large-scalegameassm}, the players of the large-scale game $\mc G^B$ in Definition~\ref{thm:large-scalegamedef} make decisions on a reduced space $\mc Y$ 
resulting in a tractable game $\mc G^T(\mc N, \mc Y, \tilde J)$.


The transformation matrices $A^i$ may, for example, randomly select a subset of
decision variables or data points, as defined next, or represent a \textit{random projection}.


\begin{defn}[Dimension reduction through selection] \label{thm:transmat}
	A transformation matrix $A$ with $K$ rows and $M$ columns is said to be a \emph{selection matrix}, if it is of rank $K<M$ and its elements
$a_{km}$ satisfy: 
$$a_{km}\in \{0,1\} \;\text{ and } \;\sum_{m=1}^M a_{km}=1\;,\; \forall m\enspace.$$
\end{defn}

Random projections are extremely popular techniques in machine learning for dealing with the curse-of-dimensionality~\cite{arriaga1999algorithmic,rahimi2007random}.
\textit{If a random projection matrix $A_R$ is carefully chosen, then all pairwise Euclidean distances, and hence, the geometry of the set of points in $\mc X$ are preserved in $\mc Y$ with high probability. }
There are many possible constructions for the random projection matrix $A_R$ that preserve pairwise distances. The most common one is choosing entries as i.i.d. standard Gaussian random variables. Another common alternative is the random sign matrix whose entries are set to $+1$ or $-1$ with equal probability. The well-known Johnson-Lindenstrauss Lemma~\cite{johnson1984extensions} formalizes this idea. A version adopted to this paper's notation and context is presented next for completeness.
\begin{thm}\label{thm:JL-lemma}
Let $x^1, x^2 \in \mc X \subset \Real^d$ and $y^1=\frac{1}{\sqrt{r}} x^1  A_R$, $y^2=\frac{1}{\sqrt{r}} x^2 A_R $ be a pair of vectors
in $\mc X$ and their corresponding mapping to $\mc Y$. Let $A_R$ be an $d \times r$ random matrix whose entries are chosen independently from either a zero-mean unit-variance Gaussian distribution or as $+1$ or $-1$ with equal probability. Then, for $\gamma>0$,
\begin{align*}
 \nonumber P\left[ (1-\gamma)\norm{x^1-x^2}^2 \leq \norm{y^1-y^2}^2 \leq (1 + \gamma)\norm{x^1-x^2}^2 \right] \\ 
 \geq \ 1-2 e^{-(\gamma^2-\gamma^3)\frac{r}{4}}\ . \nonumber
\end{align*}
\end{thm}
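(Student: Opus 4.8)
The plan is to reduce the two-point statement to a single one-sided concentration inequality and then attack it with a Chernoff-type moment generating function (MGF) bound. First I would exploit the linearity of the map. Setting $u = x^1 - x^2 \in \Real^d$, we have $y^1 - y^2 = \tfrac{1}{\sqrt r}\, u A_R$, so the event concerns only the single vector $u$. By homogeneity of both sides in $\norm{u}^2$, I may rescale and assume $\norm{u}=1$, reducing the goal to
$$ P\left[ (1-\gamma) \leq \norm{\tfrac{1}{\sqrt r}\, u A_R}^2 \leq (1+\gamma) \right] \;\geq\; 1 - 2e^{-(\gamma^2 - \gamma^3) r/4}\enspace. $$

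Next I would expand $\norm{\tfrac{1}{\sqrt r}\, u A_R}^2 = \tfrac{1}{r}\sum_{j=1}^r Z_j^2$, where $Z_j = \sum_{m=1}^d u_m (A_R)_{mj}$ is the inner product of $u$ with the $j$-th column of $A_R$. Since the columns are independent, the $Z_j$ are i.i.d. In the Gaussian case each $Z_j$ is a linear combination of independent $N(0,1)$ entries with coefficients $u_m$, hence $Z_j \sim N(0,\norm{u}^2) = N(0,1)$; thus $\sum_j Z_j^2 \sim \chi^2_r$, the estimator is unbiased ($E[\norm{\tfrac{1}{\sqrt r}\,u A_R}^2] = \norm{u}^2 = 1$), and the claim becomes a standard chi-squared tail bound.

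For the upper tail I would apply Markov's inequality to $e^{\lambda \sum_j Z_j^2}$ with $\lambda \in (0,\tfrac12)$, giving $P[\sum_j Z_j^2 \geq (1+\gamma)r] \leq e^{-\lambda(1+\gamma)r}\prod_j E[e^{\lambda Z_j^2}]$. Using $E[e^{\lambda Z_j^2}] = (1-2\lambda)^{-1/2}$ and optimizing (the minimizer is $\lambda = \gamma/(2(1+\gamma))$) leaves $e^{-\frac{r}{2}(\gamma - \ln(1+\gamma))}$; the Taylor remainder $\gamma - \ln(1+\gamma) \geq \tfrac12(\gamma^2-\gamma^3)$ for $0<\gamma<1$ then bounds this by $e^{-(\gamma^2-\gamma^3)r/4}$. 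The lower tail is handled symmetrically with $\lambda<0$ and yields an even tighter exponent, so a union bound over the two tails produces the overall factor $2$ and the stated rate.

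The main obstacle is the $\pm 1$ (Rademacher) construction, where $Z_j$ is no longer Gaussian and $\sum_j Z_j^2$ is not chi-squared, so the exact MGF formula above is unavailable. Here I would instead establish the \emph{inequality} $E[e^{\lambda Z_j^2}] \leq (1-2\lambda)^{-1/2}$ directly, which suffices to run the identical Chernoff argument. The clean route is a Gaussian-smoothing trick: writing $e^{\lambda Z^2} = E_g\!\left[e^{\sqrt{2\lambda}\,Z g}\right]$ for an auxiliary $g \sim N(0,1)$, swapping expectations by Fubini, and applying Hoeffding's lemma (each $Z_j$ is a Rademacher sum with $\sum_m u_m^2 = 1$, hence sub-Gaussian with proxy $1$, so $E_Z[e^{s Z}] \leq e^{s^2/2}$) gives $E_Z[e^{\lambda Z^2}] \leq E_g[e^{\lambda g^2}] = (1-2\lambda)^{-1/2}$. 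With this comparison in hand, both constructions reach the same exponent, completing the proof.
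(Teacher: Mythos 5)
You should first be aware that the paper itself offers no proof of Theorem~\ref{thm:JL-lemma}: it is stated as a known result (a restatement of the Johnson--Lindenstrauss Lemma, citing \cite{johnson1984extensions}) ``for completeness,'' so there is no in-paper argument to compare against and your proof must stand on its own. On its own terms, your argument is the standard Chernoff/moment-generating-function proof, and the Gaussian case is complete and correct: the reduction to a single unit vector $u=x^1-x^2$, the identification of $r\norm{y^1-y^2}^2$ with a $\chi^2_r$ variable, the optimizer $\lambda=\gamma/(2(1+\gamma))$, the bound $\gamma-\ln(1+\gamma)\geq\tfrac{1}{2}(\gamma^2-\gamma^3)$ for $0<\gamma<1$ (the case $\gamma\geq 1$ being vacuous since the stated right-hand side is then negative), and the observation that the lower tail exponent $-\gamma-\ln(1-\gamma)\geq\gamma^2/2$ is tighter, so the union bound yields the stated constant $2$ and rate.

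The genuine gap is in the $\pm 1$ construction, specifically its lower tail. Your Gaussian-smoothing identity $e^{\lambda Z^2}=E_g\left[e^{\sqrt{2\lambda}\,Zg}\right]$ requires $\lambda\geq 0$; the lower-tail Chernoff bound needs control of $E\left[e^{-sZ_j^2}\right]$ for $s>0$, where $\sqrt{2\lambda}$ is imaginary and the Hoeffding comparison does not transfer. Nor can you argue ``symmetrically'' by moment domination, since $e^{-sx}$ has alternating Taylor coefficients, so knowing $E[Z^{2k}]\leq E[g^{2k}]$ does not bound $E[e^{-sZ^2}]$ by $E[e^{-sg^2}]$ term by term. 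The standard repair (Achlioptas) is elementary but genuinely different from your smoothing trick: use $e^{-sx}\leq 1-sx+s^2x^2/2$ together with $E[Z^2]=1$ and $E[Z^4]=3\bigl(\sum_m u_m^2\bigr)^2-2\sum_m u_m^4\leq 3$ to get $E[e^{-sZ^2}]\leq 1-s+\tfrac{3}{2}s^2\leq e^{-s+3s^2/2}$, after which the Chernoff optimization recovers an exponent at least $(\gamma^2-\gamma^3)r/4$. With that one lemma substituted in, your proof goes through for both constructions.
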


%

\begin{figure}[t!]
  \centering
  \includegraphics[width=0.9\columnwidth]{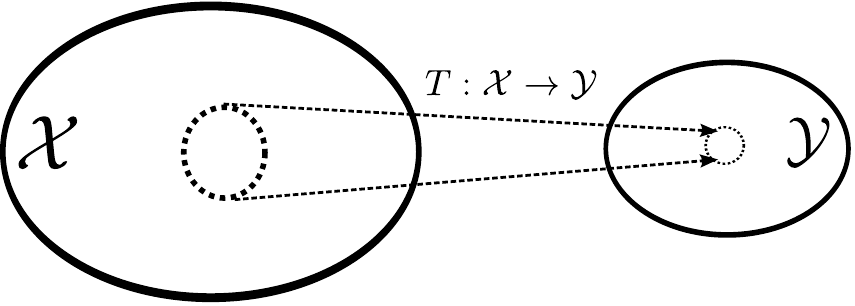}
  \caption{Illustration of the decision space transformation for large-scale strategic game reductions.}
  \label{fig:map}
\end{figure}

Note that, while it is possible to make the selection of the linear mapping $T$ and matrices $A$ introduced in the previous
section a part of the decision problem, doing so would obviously defeat the purpose of the transformation since $(A,y)$ has a higher dimension than the original large-scale game decisions, $x$.  

\begin{lem} \label{lem:combinatorial}
The combined problem of a player $i$ optimally choosing $(A^i,y^i)$ as a reduction has a higher-dimension than the 
original problem of optimally choosing $x^i$ in the large-scale game $\mc G^B$ for a 
generic linear transformation $A^i$. 
\end{lem}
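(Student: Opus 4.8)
The plan is to prove the claim by a direct degrees-of-freedom count, reading \emph{dimension} as the dimension of the space over which a player's decision variable is free to range. In the original large-scale game $\mc G^B$, player $i$'s decision is the single vector $x^i$ ranging over the convex, compact body $X^i \subseteq \Real^M$; since $X^i$ is full-dimensional, the original decision space has dimension $M$. This is the benchmark against which the combined problem must be measured.

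First I would identify the decision variable of the combined problem precisely. Here player $i$ selects the pair $(A^i, y^i)$, where $A^i$ is the $K \times M$ transformation matrix and $y^i = A^i x^i \in \Real^K$ is the reduced strategy. A generic $A^i$ carries $KM$ free real entries, while $y^i$ contributes $K$ more, so the ambient parameter count for the pair is $KM + K = K(M+1)$. I would then account for the coupling $y^i = A^i x^i$ together with the feasibility requirement $y^i \in A^i X^i$, to confirm this count does not collapse: for a generic, full-rank $A^i$ with $K < M$ the image $A^i X^i$ is a $K$-dimensional convex body, so $y^i$ retains all $K$ of its degrees of freedom independently of the $KM$ parameters of $A^i$, and the combined decision manifold still has dimension $K(M+1)$.

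The comparison is then immediate: because the reduction is genuine we have $1 \le K < M$, whence $K(M+1) = KM + K \ge M + 1 > M$, so the combined problem of choosing $(A^i,y^i)$ strictly dominates the original problem of choosing $x^i$ in dimension. Equivalently, $KM + K - M = M(K-1) + K > 0$ since both summands are nonnegative and $K>0$, which establishes the strict inequality without any case analysis on $M$.

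The main obstacle, and the reason the hypothesis restricts to a \emph{generic} $A^i$, is making the notions of ``dimension'' and ``generic'' precise enough to exclude structured transformations. For a selection matrix (Definition~\ref{thm:transmat}) every entry is pinned to $0$ or $1$ once the $K$ retained coordinates are fixed, so $A^i$ contributes \emph{no} continuous degrees of freedom and the combined continuous dimension drops to $K < M$; the remaining complexity is purely combinatorial, hidden in the discrete choice of which $K$ of the $M$ coordinates to keep. Genericity is exactly what rules out such lower-dimensional subvarieties of $\Real^{K \times M}$ and restores the full $KM$ continuous parameters in $A^i$, after which the arithmetic above is routine. I would therefore spend the bulk of the argument pinning down genericity as ``$A^i$ not confined to a proper algebraic subset of $\Real^{K \times M}$,'' and only then invoke the parameter count.
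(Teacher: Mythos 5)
Your proof is correct and takes essentially the same route as the paper's: a direct parameter count showing that $(A^i,y^i)$ carries $KM+K=K(M+1)>M$ degrees of freedom versus the $M$ of $x^i$. The extra care you take with genericity and with verifying that the coupling $y^i=A^i x^i$ does not collapse the count is a reasonable elaboration of what the paper leaves implicit, but the core argument is identical.
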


\begin{proof}
The proof immediately follows from definitions. Since $A$ has $K$ rows and $M$ columns, the total number of variables
in $(A,y)$ is $K(M+1)>M$ in the reduced case, which defeats the purpose since $x^i$ has $M$ elements. 
\end{proof}

\section{ANALYSIS OF LARGE-SCALE AND \\ REDUCED GAMES} \label{sec:ne}

A fundamental question of interest in large-scale games is how a Nash Equilibrium (NE) solution
of the original game, $\mc G^B$, relates to that of the tractable game $\mc G^T$ obtained through a linear transformation such as 
random projection in the decision space. First, basic
results will be discussed for the general case of $N$-players. Then, specific results for two-player
quadratic games will be presented.

\begin{lem} \label{thm:lemma}
If the cost function of player $i$, $J^i$ in a large-scale game  $\mc G^B$ is
convex in $x^i$, then $\tilde J$ of the tractable game $\mc G^T$  is
also convex in $y^i$.
\end{lem}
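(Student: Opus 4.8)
The plan is to recognize $\tilde J^i$ as the composition of the original cost $J^i$ with an \emph{affine} reconstruction map from the reduced space $\mc Y$ back to $\mc X$, and then to invoke the elementary fact that convexity is preserved under pre-composition with an affine map. Because the reduction $T$ acts linearly through the matrices $A^i$, playing in $\mc Y$ amounts to restricting each $x^i$ to a linear image of $y^i$, and the reduced cost inherits convexity precisely because this lift is linear.

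First I would make the link between $\tilde J$ and $J$ explicit. Writing the reduced cost as $\tilde J^i(y^i,y^{-i}) = J^i(L^i y^i, L^{-i} y^{-i})$, where each $L^j : \Real^K \to \Real^M$ is a fixed linear reconstruction (for a generic full-row-rank $A^j$ one natural choice is the right inverse $L^j = (A^j)^\top(A^j (A^j)^\top)^{-1}$, so that $A^j L^j = I_K$), the claim reduces to a statement about a single scalar function. I would then fix the opponents' reduced decisions $y^{-i}$, which fixes $x^{-i} = L^{-i} y^{-i}$, so that $J^i$ is regarded as a convex function of its first argument alone.

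The core step is then routine. For any $y^i_1, y^i_2 \in Y^i$ and $\theta \in [0,1]$, linearity of $L^i$ gives $L^i(\theta y^i_1 + (1-\theta) y^i_2) = \theta L^i y^i_1 + (1-\theta) L^i y^i_2$; feeding this into $J^i$ and using its convexity in $x^i$ yields $\tilde J^i(\theta y^i_1 + (1-\theta) y^i_2, y^{-i}) \le \theta \tilde J^i(y^i_1,y^{-i}) + (1-\theta)\tilde J^i(y^i_2,y^{-i})$, the desired convexity in $y^i$. I would also remark that this statement is well posed because $Y^i = A^i(X^i)$ is convex, being the linear image of the convex set $X^i$.

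The main obstacle is not the inequality but pinning down the exact relationship between $\tilde J$ and $J$: once the reconstruction is known to be affine, the result is immediate, so the substantive content is justifying that the reduced game's cost arises by composing $J^i$ with such a linear lift rather than with some nonlinear reparameterization of the decision variable. Everything downstream---pushing a convex combination through a linear map and invoking convexity of $J^i$---is standard and needs no assumption beyond linearity of $T$.
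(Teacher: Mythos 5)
Your proof is correct and follows essentially the same route as the paper, which simply cites preservation of convexity under composition with an affine map (linearity of $T$). You are in fact somewhat more careful than the paper's one-line argument: by introducing the explicit linear lift $L^i = (A^i)^{\top}(A^i (A^i)^{\top})^{-1}$ you resolve the directional ambiguity in how $\tilde J$ is induced from $J$, and your choice is consistent with the pseudo-inverse relations the paper uses later in \eqref{e:cequiv1}--\eqref{e:cequiv2}.
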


\begin{proof}
The result is due to linearity of the mapping $T$, see \eg \cite[Chap. 3.2.2]{boydbook}.
\end{proof}

The following well-known result from \cite{basargame} establishes sufficient conditions for existence of NE in the game $\mc G^B$.
\begin{prop} \label{thm:basarne}
If the decision space $X^i$ of each player $i$ in the large-scale game $\mc G^B$ is closed, bounded, and convex, and the
cost function $J^i$ is jointly continuous in all its arguments and strictly convex in $x^i$ for any $x^{-i}$ and
for all $i \in \mc N$, then the game admits a (pure) Nash Equilibrium solution.
\end{prop}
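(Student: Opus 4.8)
The plan is to reduce the existence question to a fixed-point problem for the joint best-response map and then invoke a standard fixed-point theorem. First I would define, for each player $i$ and each fixed profile $x^{-i}$ of the opponents, the best-response set $BR^i(x^{-i}) := \arg\min_{x^i \in X^i} J^i(x^i,x^{-i})$. Since $X^i \subset \Real^M$ is closed and bounded it is compact, and $J^i$ is continuous in $x^i$, so this set is nonempty by the Weierstrass theorem. The strict convexity of $J^i$ in $x^i$ over the convex set $X^i$ then guarantees that the minimizer is unique, so that $BR^i$ is in fact a single-valued function of $x^{-i}$ rather than a set-valued correspondence.

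The crucial step is to show that each $BR^i$ is continuous in $x^{-i}$. Here I would appeal to Berge's maximum theorem: because $J^i$ is jointly continuous in all its arguments and the feasible set $X^i$ does not depend on $x^{-i}$ (so the constraint correspondence is trivially continuous and compact-valued), the $\arg\min$ correspondence is upper hemicontinuous. A single-valued upper hemicontinuous correspondence with compact range is a continuous function, so combining this with the uniqueness established above upgrades $BR^i$ to a genuine continuous map. I expect this continuity argument to be the main obstacle, since it is exactly where the joint continuity hypothesis is indispensable; without it the best response could jump discontinuously as the opponents vary their strategies, and the fixed-point machinery would fail.

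With continuity in hand, I would assemble the stacked best-response map $F : \mc X \to \mc X$ defined by $F(x) = \big(BR^1(x^{-1}),\ldots,BR^N(x^{-N})\big)$. Each component is continuous, so $F$ is continuous, and it maps the joint decision space $\mc X = X^1 \times \cdots \times X^N$---a nonempty, compact, convex subset of $\Real^{M \times N}$, being a finite product of such sets---into itself. Brouwer's fixed-point theorem then yields a point $x^* \in \mc X$ with $F(x^*) = x^*$, that is, $x^{*,i} = BR^i(x^{*,-i})$ for every $i \in \mc N$. By construction no player can strictly lower its cost through a unilateral deviation, which is precisely the defining property of a pure Nash equilibrium, completing the argument. (I note that the single-valuedness is what lets me use Brouwer directly; dropping strict convexity to mere convexity would force the weaker Kakutani fixed-point theorem on the set-valued $BR^i$ instead.)
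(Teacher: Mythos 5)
Your argument is correct, and it is essentially the classical proof of the result you are being asked to establish: strict convexity plus joint continuity make each reaction map single-valued and (via Berge's maximum theorem) continuous, and Brouwer's fixed-point theorem applied to the stacked best-response map on the compact convex product space yields the equilibrium. The paper itself supplies no proof here --- it states the proposition as a well-known theorem imported from \cite{basargame} --- so there is nothing in the text to diverge from; your write-up simply reconstructs the standard argument behind that citation, including the correct observation that weakening strict convexity to convexity would require passing from Brouwer to Kakutani.
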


Combining Lemma~\ref{thm:lemma} and Proposition~\ref{thm:basarne} leads to:
\begin{prop} \label{thm:netone}
If the large-scale  game $\mc G^B$  satisfies the sufficient conditions in Proposition~\ref{thm:basarne} and admits a (pure) NE, then the tractable game $\mc G^T$ also admits a (pure) NE.
\end{prop}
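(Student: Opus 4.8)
The plan is to verify that the reduced game $\mc G^T$ itself satisfies the three hypotheses of Proposition~\ref{thm:basarne} --- compact convex strategy sets, joint continuity of each reduced cost, and strict convexity of each reduced cost in the player's own variable --- and then simply invoke that proposition to obtain a pure NE for $\mc G^T$. Since Lemma~\ref{thm:lemma} already supplies convexity (but not strictness), the work is concentrated in checking the remaining structural conditions survive the linear reduction.

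First I would handle the strategy sets. Each reduced set $Y^i=A^i X^i$ is the image of the closed, bounded, convex set $X^i$ under the continuous linear map $A^i$. The image of a compact set under a continuous map is compact, hence closed and bounded, and the image of a convex set under a linear map is convex; therefore each $Y^i$, and hence the product $\mc Y$, inherits exactly the topological and convexity structure required. Joint continuity of $\tilde J^i$ is then immediate once $\tilde J^i$ is expressed as a composition of the jointly continuous $J^i$ with the continuous linear maps relating $x$ and $y$, since a composition of continuous maps is continuous.

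The crux --- and the step I expect to be the main obstacle --- is upgrading the convexity of $\tilde J^i$ in $y^i$ furnished by Lemma~\ref{thm:lemma} to the \emph{strict} convexity that Proposition~\ref{thm:basarne} demands. The difficulty is that $A^i$ has rank $K<M$ and hence a nontrivial null space, so strict convexity of $J^i$ in $x^i$ need not be preserved along directions collapsed by the forward map $x^i\mapsto A^i x^i$. My resolution is to define the reduced cost through a \emph{lifting}: because $A^i$ has full row rank $K$ (guaranteed by the selection-matrix construction of Definition~\ref{thm:transmat} and, almost surely, by random projections), it admits a right inverse $B^i$ with $A^i B^i$ equal to the $K\times K$ identity, and any such $B^i$ is injective since it possesses a left inverse. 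Defining $\tilde J^i(y^1,\ldots,y^N):=J^i(B^1 y^1,\ldots,B^N y^N)$ then realizes $\tilde J^i$, as a function of the own variable $y^i$ for any fixed $y^{-i}$, as the composition of the strictly convex $J^i$ with the injective linear map $B^i$; composing a strictly convex function with an injective affine map preserves strict convexity, which closes the gap.

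With all three hypotheses re-established for $\mc G^T$, I would conclude by applying Proposition~\ref{thm:basarne} verbatim to $\mc G^T$ to obtain a pure NE. I expect the compactness, convexity, and continuity claims to be routine consequences of linearity and of the behaviour of continuous maps on compact and convex sets, so essentially all of the conceptual content lies in the injective-lifting argument that rescues strict convexity.
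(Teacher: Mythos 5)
Your proof is correct and follows the same high-level route as the paper, which justifies the proposition in a single line by ``combining'' Lemma~\ref{thm:lemma} with Proposition~\ref{thm:basarne}; but you supply genuine content that the paper omits. In particular, you correctly identify that Lemma~\ref{thm:lemma} only delivers convexity of $\tilde J^i$ in $y^i$, whereas Proposition~\ref{thm:basarne} demands \emph{strict} convexity, and that a naive composition with the rank-deficient forward map $x^i\mapsto A^i x^i$ cannot close this gap. Your repair --- defining $\tilde J^i$ through an injective right inverse $B^i$ of the full-row-rank matrix $A^i$ and using the fact that precomposition with an injective affine map preserves strict convexity --- is sound, and it is in fact the construction the paper implicitly uses: taking $B^i=A^{i,T}(A^iA^{i,T})^{-1}$ reproduces exactly the gradient and Hessian relations \eqref{e:cequiv1}--\eqref{e:cequiv2} of Section~\ref{sec:convex}, so your lifting makes precise the otherwise undefined relationship between $J$ and $\tilde J$. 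The verification that each $Y^i=A^iX^i$ is compact and convex and that $\tilde J^i$ is jointly continuous is routine, as you say. One small caveat worth flagging: the lifted point $B^iy^i$ for $y^i\in Y^i$ need not lie in $X^i$ (it is the projection of $x^i$ onto the row space of $A^i$), so your argument tacitly requires $J^i$ to be defined and strictly convex on a neighbourhood of $B^iY^i$ rather than only on $X^i$; the paper is equally silent on this point, so it does not disadvantage your proof relative to the original, but it is the one remaining loose end in an otherwise complete argument.
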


\subsection{Convex Large-Scale and Reduced Games} \label{sec:convex}

Proposition~\ref{thm:netone} leads to the question of when and where can Large-Scale and Reduced Games be equivalent. The player costs in the large-scale and reduced games $\mc G^B$, $\mc G^T$, are defined as $J_i(x^i,x^{-i})$ and 
$\tilde J_i(y^i, y^{-i})=\tilde J_i(A^i x^i,A^{-i} x^{-i})$, respectively. Assume that $J_i$ is convex in $x^i$ for all players $i$. Hence, $\mc G^T$ is also convex and both admit NE solutions from 
Lemma~\ref{thm:lemma} and Proposition~\ref{thm:basarne}. Let $x^\star$ be the NE of $\mc G^B$. The Taylor series expansions of the 
costs $J_i$ and $\tilde J_i$ around $x_i^\star$ given $x^{-i,\star}$ provide the following second-order approximations:
\begin{align}\label{e:taylor1}
\nonumber J_i(x^i,x^{-i})\approx J_i(x^{i,},x^{-i,\star})+\left[ \nabla_{x^i}J^i\right]^T \cdot(x^i-x^{i,\star}) \\ 
+\ (x^i-x^{i,\star})^T \left[ \nabla_{x^i\,x^i}^2 J^i\right]  \cdot(x^i-x^{i,\star})
\end{align}
and
\begin{eqnarray}
	&& \tilde J_i(A^i x^i,A^{-i} x^{-i}) \nonumber \\
	&\approx& \tilde J_i(A^i x^{i,\star},A^{-i} x^{-i,\star}) 
+  \left[A^{i,T}\nabla_{x^i}\tilde J^i\right]^T \cdot(x^i-x^{i,\star}) \nonumber \\
&& \ +\ (x^i-x^{i,\star})^T \left[A^{i,T} \nabla_{x^i\,x^i}^2 \tilde J^i A^i\right]  \cdot(x^i-x^{i,\star})\enspace,\label{e:taylor2}
\end{eqnarray}
where $\norm{x-x^\star}<\eps$ for a small $\eps>0$.

The convex games $\mc G^B$ and $\mc G^T$ are locally approximately equivalent around the NE $x^\star$, if 
$$ J_i(x^i,x^{-i}) = \tilde J_i(A^i x^i,A^{-i} x^{-i})+\delta$$
for $\{x \in \mc X \mid \norm{x-x^\star}<\eps\}$, 
where the scalar $\delta>0$ accounts for the discrepancy due to higher-than-second-order terms.
Then, the following relationships, obtained using basic linear algebraic manipulations, 
establish a connection between the first- and second-order terms in 
the large-scale and reduced games such that: 
\begin{equation}\label{e:cequiv1}
 \nabla_{x^i}\tilde J^i = (A^i A^{i,T})^{-1} A^i \left[\nabla_{x^i}J^i(x^i,x^{-i})\right] 
\end{equation}
and
\begin{equation}\label{e:cequiv2}
 \nabla_{x^i\,x^i}^2 \tilde J^i= (A^i A^{i,T})^{-1} A^i \left[\nabla_{x^i\,x^i}^2 J^i\right]  A^{i,T} (A^i A^{i,T})^{-1}\enspace,
 \end{equation}
for all $i$ and $x$ such that $\norm{x-x^\star}<\eps$.

Next, the relationship between large-scale and tractable games is investigated for the special case of
two-player quadratic games.

\subsection{Quadratic 2-Player Large-Scale And Reduced Games} \label{sec:quadratic}

Quadratic games are of particular interest in game theory as they constitute second-order approximation to games with nonlinear cost functions, while admitting closed-form equilibrium solutions that provide useful insights \cite{basargame}. They are also related to Quadratic Programming which is encountered in key machine learning training algorithms~\cite{Bishopbook}.

Consider the following special case of the game $\mc G^T$ with only two players
$P^1$ and $P^2$ having the respective cost functions:
\begin{align} 
\label{e:quadratic} \tilde{J}^1 \left(y^1,y^2\right)= y^{1,T} \tilde Q_1 y^{2}- y^{1,T} \tilde r^1 +  v^1 \enspace,\\
\label{e:quadratic2} \tilde{J}^2 \left(y^1,y^2\right)= y^{2,T} \tilde Q_2 y^{1}- y^{2,T} \tilde r^2 + v^2 \enspace.
\end{align}
The game parameters are the scalars $v^1$, $v^2$, the vectors
$\tilde r^1$, $\tilde r^2$, and matrices $\tilde Q_1$, $\tilde Q_2$. 

The corresponding cost functions \eqref{e:quadratic}--\eqref{e:quadratic2} 
of the original Large-scale Game $\mc G^B$ can be written as
\begin{align} 
\label{e:quad} {J}^1 \left(x^1,x^2\right)= x^{1,T}  Q_1 x^{2}- x^{1,T}  r^1 +  v^1 \enspace,\\
\label{e:quad2} {J}^2 \left(x^1,x^2\right)= x^{2,T} Q_2 x^{1}- x^{2,T}  r^2 +  v^2 \enspace.
\end{align}
If the matrices $Q_1$ and $Q_2$ are positive definite and hence invertible~\cite{matrixcookbook}, then the cost functions $J^1$ and ${J}^2$ are both quadratic and strictly convex.
Therefore, the first derivatives vanishing serves as necessary and sufficient for optimality in calculating player best responses 
$$ x^{1,\star} = \left(Q_2 \right)^{-1} r^2 \; \text{ and }  
   x^{2,\star} = \left(Q_1 \right)^{-1} r^1\enspace. $$
Thus, $x^\star=[x^{1,\star}, x^{2,\star}]$ is the unique NE of the original large-scale game. Note that the NE strategy of one player is determined by the parameters of the other player.

\textit{When are the outcomes of these two games equivalent?} To answer this question, let
${J}^1=\tilde{J}^1$ for Player $1$. Since $y^1=A^1 x^1$ and $y^2=A^2 x^2$, the equivalence between linear
terms are immediately established by $r^1=A^{1,T} \tilde r^1$ and $r^2=A^{2,T} \tilde r^2$. Focusing on the quadratic terms,
$$ x^{1,T}  Q_1 x^{2} = y^{1,T} \tilde Q_1 y^{2}$$
and
$$ x^{1,T}  Q_1 x^{2} = x^{1,T} A^{1,T} \tilde Q_1 A^2 x^{2},$$
lead to
\begin{equation}\label{e:qequiv1}
 Q_1 = A^{1,T} \tilde Q_1 A^2.
\end{equation}
Multiplying each side first with $A^1$ and $A^{2,T}$, and then with $(A^1 A^{1,T})^{-1}$
and $(A^2 A^{2,T})^{-1}$ from left and right, respectively, yields
\begin{equation}\label{e:qequiv2}
 \tilde Q_1= (A^1 A^{1,T})^{-1} A^1  Q_1 A^{2,T} (A^2 A^{2,T})^{-1}.
 \end{equation}
The analysis can be repeated similarly for  ${J}^2=\tilde{J}^2$.

From (\ref{e:qequiv1}) and (\ref{e:qequiv2}), it is observed that unless $A^1=A^2$, \ie the players
use the same reduction mappings, positive definiteness of $Q^i$ does not guarantee the positive definiteness
of $\tilde{Q}^i$ and vice versa. However, if $A^1=A^2$, then one matrix being positive definite ensures that
the other one is so as well. In this case, the matrices can be decomposed as
$R^i R^{i,T}$ and  $\tilde R^i \tilde R^{i,T}$, respectively. Consequently,
$$ \tilde R^i \tilde R^{i,T}=(A A^{T})^{-1} A R^i R^{i,T}  A^{T} (A A^{T})^{-1},\; i=1, 2, $$
or
\begin{equation}\label{e:qequiv3}
\tilde R^i= (A A^{T})^{-1} A R^i ,\; i=1, 2.
\end{equation}
The relationships (\ref{e:qequiv1})-(\ref{e:qequiv3}) establish a connection between the quadratic terms in 
the large-scale and reduced versions of the game. These can be used in design of reductions and/or choice of cost parameters.

%

\section{LARGE-SCALE GAMES FOR ADVERSARIAL MACHINE LEARNING} \label{sec:advml}

Security games have been used increasingly to model decision making in network and real-life problems with resource constraints~\cite{alpcan-book,tambegame,Barth-2011-reactive}. Adversarial machine learning (AdvML) is
the study of effective machine (or statistical) learning techniques against an adversarial opponent, who aims to disrupt the learning
and hence subsequent decision making process with malicious intent. Many adversarial learning problems
can be posed as security games. Moreover, high-dimensional and high-volume data generated by modern systems naturally
lead to large-scale game formulations which can be analyzed adopting an approach similar to the one discussed
in the previous sections. A specific adversarial learning problem based on linear Support Vector Machines (SVMs)
is investigated next, which leads to a large-scale game formulation.


Consider a linear SVM as a binary classifier trained using a large and high-dimensional 
labeled data set consisting of $n$ $d-$dimensional real data vectors 
with respective $\{-1, +1\}$ labels, where $n, d \gg 1$. The choice of linear SVM is without loss of any
generality since nonlinear kernels can be embedded into the random projection, i.e. the inner product of
projected points can approximate their original kernel evaluation if the transformation is 
carefully selected~\cite{rahimi2007random}.

The training of the SVM involves solving an optimization problem
where a hyperplane with normal vector, $w^\star$, is obtained that maximizes the (soft) geometric margin (the minimum distance
of a data point to the hyperplane). The dual formulation of the problem leads to the following well-known formulation:
\begin{align}\label{e:svmbasic}
 \nonumber \max_{\a} \one^T \a - \dfrac{1}{2} \a^T Y X X^T Y \a \\
 \rm{s.t.}\; \one^T Y \a=0, \quad 0\leq \a \leq C,
\end{align}
where $\a$ is the vector of Lagrange multipliers, $C>0$ is a constant, $\one$ is a vector of ones, $X \in \Real^{n\times d}$ is an $n\times d$ matrix whose rows are the data vectors, and $Y \in \Real^{n\times n}$ is a diagonal matrix with diagonal entries the corresponding $\{-1, +1\}$ labels. The optimal separating hyperplane is represented by $w^\star=\a^{\star,T} Y X$, where $\a^\star$ is the optimal solution to (\ref{e:svmbasic}). Note that, $Y^T=Y$ by definition and if an $\a_i^\star>0$ then the corresponding data point is called a \emph{support vector}. 

Let the player who solves (\ref{e:svmbasic}) with the aim of maximizing the geometric (soft) margin be called ``Defender''. Since
the data set has a large number of points, $n\gg 1$, and each data point is high-dimensional, $d\gg 1$, the Defender adopts
random projection as dimension reduction and random data selection as volume reduction strategies. Let $A_R$ be a $d \times r$ 
random projection matrix, $r<d$, as defined in Theorem~\ref{thm:JL-lemma} and $A_S$ be a $n\times n$ matrix, which 
extends the selection matrix in Definition~\ref{thm:transmat} by adding zero rows to appropriate places. Thus, $X A_R$ is a random projection of data vectors where their distances are preserved with a high probability.
The selection mapping $A_S X$, on the other hand, deletes a subset of data vectors to reduce the data volume. The reduction
in training data dimension and volume decreases the computational and information burden of the Defender. It also provides a certain
degree of robustness against malicious attacks in an adversarial setting.

A common attack type in adversarial learning is malicious distortion of training data points (vectors) by an Attacker. Let $X+D$
be the distorted data matrix, where $D$ is determined by the Attacker. The amount of injected distortion to the data is often bounded due to either the computational burden or increasing risk of discovery of the Attack(er). For example, some of the rows of $D$ may be zero indicating that the Attacker distorts only a subset of the data.

The counterpart of (\ref{e:svmbasic}) in the adversarial learning formulation describes is then:
\begin{align}\label{e:svmadv}
 \nonumber \max_{\a} \one^T A_S \a - \dfrac{1}{2} \a^T Y A_S (X+D) A_R A_R^T (X+D)^T A_S Y \a \\
 \rm{s.t.}\; \one^T Y A_S \a=0, \quad 0\leq \a \leq C,
\end{align}

The optimal geometric margin $\g^\star$ of the canonical hyperplane $w^\star$ obtained from (\ref{e:svmbasic}) is defined as $\g^\star=1/\norm{w^\star}_2$, where $\norm{w^\star}_2^2=\sum_i \a_i$. Define, likewise, $\tilde \g^\star$ using ~(\ref{e:svmadv}). 
Since geometric margin plays a foundational role in the formulation of the binary SVM classification problem, it makes sense
to use it as a criterion in the adversarial learning game. Hence, the player objectives can be posed as minimizing (maximizing)
the distortion in the geometric margin $\abs{\g^\star-\tilde \g^\star}$ for the Defender (Attacker), respectively.

As a starting point of the analysis, the optimization problem (\ref{e:svmadv}) is reformulated by capturing the impact
of random selection mapping $A_S$ through a set of new constraints:
\begin{align}\label{e:svmadvns}
 \nonumber \max_{\a} \one^T \a - \dfrac{1}{2} \a^T Y (X+D) A_R A_R^T (X+D)^T  Y \a \\
 \rm{s.t. } \quad \one^T Y \a=0, \quad 0\leq \a \leq C, \\
 \rm{and }\; (I-A_S)X=0, \quad  (I-A_S)\a=0. \label{e:svmadvnscon}
\end{align}

Note that, 
$$\a^T Y (X+D) A_R A_R^T (X+D)^T  Y \a = \norm{\a^T Y (X+D) A_R}_2^2\ .$$
Using the triangle inequality,
\begin{eqnarray}\label{e:normbound}
	(1-\delta) \norm{\a^T Y X A_R}_2^2 &\leq& \norm{\a^T Y (X+D) A_R}_2^2 \\
			&\leq&  (1+\delta) \norm{\a^T Y X A_R}_2^2\ , \nonumber
\end{eqnarray}
where
$$ \delta:= \dfrac{\norm{\a^T Y D A_R}_2^2}{\norm{\a^T Y X A_R}_2^2} <1\ .$$

Let $Z_{pds}(\a_{pds}^\star)$ be the optimal value of (\ref{e:svmadv}) and (\ref{e:svmadvns})-(\ref{e:svmadvnscon}). The value, $Z_{pd}(\a_{pd}^\star)$, obtained by resolving (\ref{e:svmadvns}) without the constraints in (\ref{e:svmadvnscon}) clearly leads to a higher or equal value such that 
$$Z_{pd}=\beta Z_{pds}\ , \quad \beta \geq 1\ . $$
It is worth noting that $\beta$ is a function of the data $X$ as well as $A_R$, $A_S$, and $D$.

Define 
$$Z_p(\a_{pd}^\star):=\one^T \a_{pd}^\star - \norm{\a_{pd}^{\star,T} Y X A_R}_2^2\ ,$$
which is the optimal value without any malicious distortion, $D=0$, of the data set.
Then, from (\ref{e:normbound}), 
$$Z_p(\a_{pd}^\star) \geq Z_{pd}(\a_{pd}^\star)-\delta \norm{\a_{pd}^{\star,T} Y X A_R}_2^2 $$
and hence
\begin{equation}\label{e:ineq1}
 Z_p(\a_{pd}^\star) \geq \beta Z_{pds}(\a_{pds}^\star)-\delta \norm{\a_{pd}^{\star,T} Y X A_R}_2^2\ .
\end{equation}

Let $Z(\a^\star)$ denote the optimal value of the original problem, (\ref{e:svmbasic}).
It is important to note that $\a_{pd}^\star$ is also a feasible (but clearly not optimal) solution of (\ref{e:svmbasic}).
Hence, $Z(\a^\star)\geq Z_p(\a_{pd}^\star)$ by definition.

The rest of the analysis closely follows the one in~\cite{linearsvmproj}.
Let $V \in \Real^{d \times \rho}$ be any matrix with orthonormal columns. Define
$E:=V^T V - V^T A_R A_R^T V$, and assume $\norm{E}_2 < \phi$ for a given $A_R$.
Then,
$$ Z(\a^\star)\geq Z_p(\a_{pd}^\star) - \dfrac{1}{2} \norm{E}_2 \norm{\a_{pd}^{\star,T} Y X}_2^2\ .$$
It is shown in~\cite{linearsvmproj} that 
$$ \norm{\a^{\star,T}_{pd} Y X }_2^2 \leq \dfrac{1}{1-\norm{E}_2} \norm{\a^{\star,T}_{pd} Y X A_R}_2^2\ .$$
Thus, from (\ref{e:ineq1}), 
\begin{align}\label{e:ineq2}
Z(\a^\star) \geq & \; \beta Z_{pds}(\a_{pds}^\star)  \nonumber \\
 & -\dfrac{\beta}{2} \left( \dfrac{\norm{E}_2}{1-\norm{E}_2}+2\delta \right)
 \norm{\a_{pds}^{\star,T} Y X A_R}_2^2\ . 
\end{align}

Remember that, from its definition $w^\star=\a^{\star,T} Y X=\sum_i \a^\star_i$ and 
$w^{\star}_{pds}=\a_{pds}^{\star,T} Y X=\sum_i \a_{pds,i}^{\star}$. Therefore,
$Z(\a^\star)=0.5 \norm{w^\star}_2^2$, $Z_{pds}=0.5 \norm{w^\star_{pds}}_2^2$, and
the geometric margins are $\g^\star=1/\norm{w^\star}_2$ and $\tilde \g^\star=1/\norm{w^{\star}_{pds}}_2$.
Combining these definitions with (\ref{e:ineq2}) leads to
$$ Z(\a^\star)\geq  \beta\left(1- \dfrac{\norm{E}_2}{1-\norm{E}_2}-2\delta \right)Z_{pds}(\a_{pds}^\star)$$
or
\begin{equation}\label{e:svmresult}
 \tilde \g^{\star 2} \geq \beta \left(1-\phi-2\delta\right) \g^{\star 2}\ .
\end{equation}

Now, the adversarial machine learning game between the Defender and Attacker can be defined
based on this worst-case gap between the margins, 
$$ \abs{\g^{\star 2}-\tilde \g^{\star 2}}/\g^{\star 2} \leq 1-\beta(1-\phi-2\delta)\ .$$
Let $(A_R,A_S)$ represent the actions of the Defender and $D$ of the Attacker. As one possibility, the Defender aims to maximize $\beta(1-\phi-2\delta)$ to decrease the worst-case margin gap, while the Attacker tries to minimize it. It is assumed here that random projection, selection,
and malicious distortions do not inadvertently increase the margin and help the Defender. 
The cost functions also capture the computational
gains for the Defender due to reductions and risk of detection for the Attacker.
Thus, the cost functions of the Defender and Attacker are defined as:
\begin{align}
 \min_{A_R,A_S} \, J^D(A_R,A_S,D)=-\beta(A_S)(1-\phi(A_R)-2\delta(D)) \label{e:advmldefobj} \\
 + c^D_R \norm{A_R} + c^D_S \norm{A_S} , \nonumber \\
 \min_{D} \, J^A(A_R,A_S,D)= \beta(A_S)(1-\phi(A_R)-2\delta(D))  \nonumber \\
 +c^A \norm{D} \label{e:advmlattobj}
\end{align}

The following observations can be made:
\begin{itemize}
 \item $\beta(A_S)$ is increasing in $\norm{I-A_S}$, i.e. the number of samples deleted from the data set;
 \item $\phi(A_R)$ is increasing as the random projection space decreases in number of dimensions, \ie
 number of columns, $r$, in the $d \times r$ matrix $A_R$ decreases; and
 \item $\delta(D)$ is increasing in $\norm{D-I}$, \ie the amount of distortion introduced to training data
 by malicious attacker increases.
\end{itemize}

The adversarial machine learning game is then defined as 
$$\mc G^{AdvML}\left(\{Defender, Attacker\},\{(A_R,A_S),\, D\},\{J^D,\, J^A\}\right),$$
where $J^D$ and $J^A$ are defined in (\ref{e:advmldefobj}), (\ref{e:advmlattobj}), respectively.
It is important to note that this game can only be solved numerically due to the nonlinear nature of 
the functions $\beta(A_S)$, $\phi(A_R)$, and $\delta(D)$.

\section{CONCLUSION} \label{sec:conclusion}

A framework for large-scale strategic games with continuous decision variables has been introduced in this paper. First, a characterization and basic definitions of large-scale strategic games have been presented. Second, motivated by information limitations, reduction of large-scale strategic games using linear transformations such as random projections and their effect on equilibrium solutions have been discussed. Third, a set of analytical results on convex and quadratic 2-player large-scale games and their equilibrium solutions have been obtained. Finally, a specific adversarial machine learning game formulation has been used to illustrate context-specific selection of linear reductions in large-scale games.

Large-scale strategic games as defined in this paper can be seen as complementary to their finite state and/or action counterparts, and hence to the game abstraction literature. There are multiple interesting future research directions in the continuous-kernel game domain. One open direction is further investigation of specific transformation techniques for reduction of large-scale games and projection methods. A second direction is exploration of practical solution algorithms for large-scale data-driven games and learning schemes, e.g. the game defined at the end of Section~\ref{sec:advml}. A third direction is additional applications of adversarial learning games to
specific problem domains.

\section{ACKNOWLEDGMENTS}

The authors thank Dr. Sarah Monazam Erfani for the helpful comments and discussions.

\bibliographystyle{IEEE}
\bibliography{biggame}

\end{document}